\documentclass[letterpaper, conference]{IEEEtran}

\usepackage{amsmath,amssymb,amsthm,mathtools,graphicx,booktabs,bm}
\usepackage{xcolor}
\usepackage{cite,url}
\usepackage{microtype}

\newtheorem{proposition}{Proposition}
\newtheorem{corollary}{Corollary}
\newtheorem{remark}{Remark}

\newcommand{\R}{\mathbb{R}}
\newcommand{\X}{\mathcal{X}}
\newcommand{\Xd}{\mathcal{X}_d}
\newcommand{\Xt}{\mathcal{X}_{\mathrm{true}}}
\newcommand{\Hset}{\mathcal{H}}
\newcommand{\one}{\mathbf{1}}
\newcommand{\tran}{^{\top}}
\newcommand{\norm}[1]{\lVert #1 \rVert}
\newcommand{\abs}[1]{\lvert #1 \rvert}
\newcommand{\diam}{\operatorname{diam}}
\newcommand{\widthop}{\operatorname{width}}
\newcommand{\vol}{\operatorname{vol}}

\title{Geometry-Aware Set-Membership Multilateration:\\ Directional Bounds and Anchor Selection
\vspace{.2cm}}

\author{%
\IEEEauthorblockN{Giuseppe C. Calafiore} 
\IEEEauthorblockA{{\em Fellow}, IEEE -- Department of Electronics and Telecommunications, Politecnico di Torino, Italy\\
Email: giuseppe.calafiore@polito.it}
\vspace{.5cm}}

\begin{document}
\maketitle

\begin{abstract}
In this paper, we study anchor selection for range-based localization under unknown-but-bounded measurement errors.
We start from the convex localization set $\X=\Xd\cap\Hset$ recently introduced in \cite{CalafioreSIAM}, where $\Xd$ is a polyhedron obtained from pairwise differences of squared-range 
 equations between the unknown location $x$ and the anchors, and $\Hset$ is the intersection of  upper-range hyperspheres.
Our first goal is \emph{offline} design: we derive geometry-only E- and D-type scores from the centered scatter matrix $S(A)=AQ_mA\tran$, where $A$ collects the anchor coordinates and $Q_m=I_m-\frac{1}{m}\one\one\tran$ is the centering projector,  showing that $\lambda_{\min}(S(A))$ controls worst-direction and diameter surrogates for the polyhedral certificate $\Xd$, while $\det S(A)$ controls principal-axis volume surrogates.
Our second goal is \emph{online} uncertainty assessment for a selected subset of anchors: exploiting the special structure $\X=\Xd\cap\Hset$, we derive a simplex-aggregated enclosing ball for $\Hset$ and an exact support-function formula for $\Hset$, which lead to finite hybrid bounds for the actual localization set $\X$, even when the polyhedral certificate deteriorates.
Numerical experiments are performed in two dimensions,  showing that geometry-based subset selection is close to an oracle combinatorial search, that the D-score slightly dominates the E-score for the area-oriented metric considered here, and that the new $\Hset$-aware certificates track the realized size of the selected localization set closely.
\end{abstract}

\section{Introduction}
Range-based localization and multilateration are classical primitives in navigation, robotics, and sensor networks.
Standard formulations typically assume stochastic errors and seek one point estimate via least-squares, maximum-likelihood, or convex-relaxation methods;
representative references include the classical linearized estimator of Chan and Ho, the source-localization formulation of Beck, Stoica, and Li, and SDP/SOCP relaxations for sensor-network localization \cite{ChanHo1994,BeckStoicaLi2008,Biswas2006,Tseng2007}.
Classical algebraic multilateration formulations for GPS-type equations go back to Bancroft \cite{Bancroft1985}, while robust convex formulations also address anchor-position uncertainty \cite{NaddafzadehShirazi2014}.
A recent survey is given in \cite{WiddisonLong2024}. 
By contrast, set-membership and interval methods seek guaranteed uncertainty sets under unknown-but-bounded errors, and are classical in bounded-error identification and interval analysis \cite{Milanese1996,Jaulin2001}.

In the journal precursor \cite{CalafioreSIAM}, multilateration was addressed  in this latter deterministic framework.
The range errors are assumed only to satisfy interval bounds, and the goal is to compute a guaranteed set containing all positions compatible with the measurements.
The key step in \cite{CalafioreSIAM} is a difference-of-measurements construction that eliminates the quadratic term and yields a convex polyhedron $\Xd$ containing the true nonconvex feasible set $\Xt$.
Intersecting $\Xd$ with the upper-ball set $\Hset$ gives a refined localization set
\begin{equation}
\X = \Xd \cap \Hset, \qquad \Xt \subseteq \X \subseteq \Xd.
\label{eq:Xintro}
\end{equation}
The same paper also develops SOCP/SDP procedures for inner and outer approximations of $\X$.
Among the papers closest in spirit, Shi et al.\ \cite{ShiMaoAndersonYangChen2017} also assume bounded range errors, but formulate a worst-case point-estimation problem solved via semidefinite relaxation rather than a global set-valued certification problem.
One of the open directions explicitly pointed out in \cite{CalafioreSIAM} is the need for a formal analysis of how conservative the inclusion $\Xt \subseteq \X$ becomes as the anchor geometry deteriorates.
This paper takes a design-and-selection viewpoint on that question. Suppose a platform has access to a pool of candidate anchors or ranging agents, but can activate only a subset because of communication, power, or scheduling constraints.
Which subset should be chosen \emph{before} the measurements are collected, and how should its realized uncertainty be certified \emph{after} the measurements are available?
Our answer separates these two tasks. For \emph{offline} design, we derive geometry-only E- and D-type scores from the centered scatter matrix $S(A)$, leading to target-free rules for anchor subset selection.
For \emph{online} certification, we exploit the special structure $\X=\Xd\cap\Hset$ and derive new $\Hset$-aware size bounds that explain why the actual localization set can remain much smaller than the loose polyhedral certificate suggested by $\Xd$ alone.
The resulting picture is a deterministic, set-membership analogue of anchor-geometry design in stochastic localization.
The closest approach in spirit on the stochastic side is the literature on Cram\'er--Rao/Fisher-information bounds, sensor placement, and geometry design for range-only localization, see, e.g., \cite{ChangSahai2006,Bishop2010,MorenoSalinas2013};
the present contribution differs in that the objective is a guaranteed global uncertainty set rather than a local covariance surrogate.
Our new contributions are: 1) geometry-only E- and D-type anchor-selection scores with explicit worst-direction and principal-axis interpretations for the polyhedral certificate $\Xd$;
2) $\Hset$-aware online certificates that remain finite and informative even when $\Xd$ becomes ill-conditioned;
and 3) exhaustive two-dimensional experiments that separate offline selection from post-measurement certification and show near-oracle behavior on the tested instances.

\section{Set-membership localization problem}
Consider $m$ known anchors' positions $a^{(i)}\in\R^n$, $i=1,\dots,m$, collected in the matrix
\[
A=[a^{(1)}\ \cdots\ a^{(m)}]\in\R^{n\times m},
\]
and measurements $\xi_i$ are taken of the (squared) Euclidean distance between an unknown location $x$ and the anchors:
\begin{equation}
\norm{x-a^{(i)}}_2^2 = \xi_i,  \quad i=1,\dots,m.
\label{eq:model}
\end{equation}
Here, according to the setup proposed in \cite{CalafioreSIAM}, we assume that $\xi_i$ is corrupted by unknown-but-bounded noise, hence only interval bounds $\xi_i^-\le \xi_i\le \xi_i^+$ are assumed known.
This model actually covers both plain and squared range measurements with absolute or relative unknown-but-bounded errors, as detailed in \cite{CalafioreSIAM}.

\begin{remark}[Noise Model Equivalence]\rm In practice, depending on the physical principles they are based on (e.g. time-of-flight, power attenuation, etc.), different types of 
range measurement devices may have different 
types of errors, such as 
 errors on the distance itself, errors on the squared distance, or  {\em relative} errors.
However, 
under the interval uncertainty model, an interval error on the distance can be mapped directly and exactly to an equivalent model with interval uncertainty on the squared distance, and the same is true for relative errors on the distance or squared distance.
Indeed, Section~2 of \cite{CalafioreSIAM}) shows that the squared interval model  (\ref{eq:model}) 
is general and representative of most practical situations. 
\end{remark}

According to such model and bounds, the unknown target's position belongs to a nonconvex geometrical region $\Xt$, which is the intersection of $m$ hyperspherical annuli: 
\begin{equation}
\Xt = \bigcap_{i=1}^m
\{x\in\R^n: \xi_i^- \le \norm{x-a^{(i)}}_2^2 \le \xi_i^+\},
\label{eq:Xtrue}
\end{equation}
see Figure~\ref{fig_3anchors} for an illustrative example in dimension $n=2$ with $m=3$ anchors (and also Figure~\ref{fig:method} for a detailed visualization of the localization construction).

\begin{figure}[h!tb]
\begin{center}
\includegraphics[width=.4\textwidth]{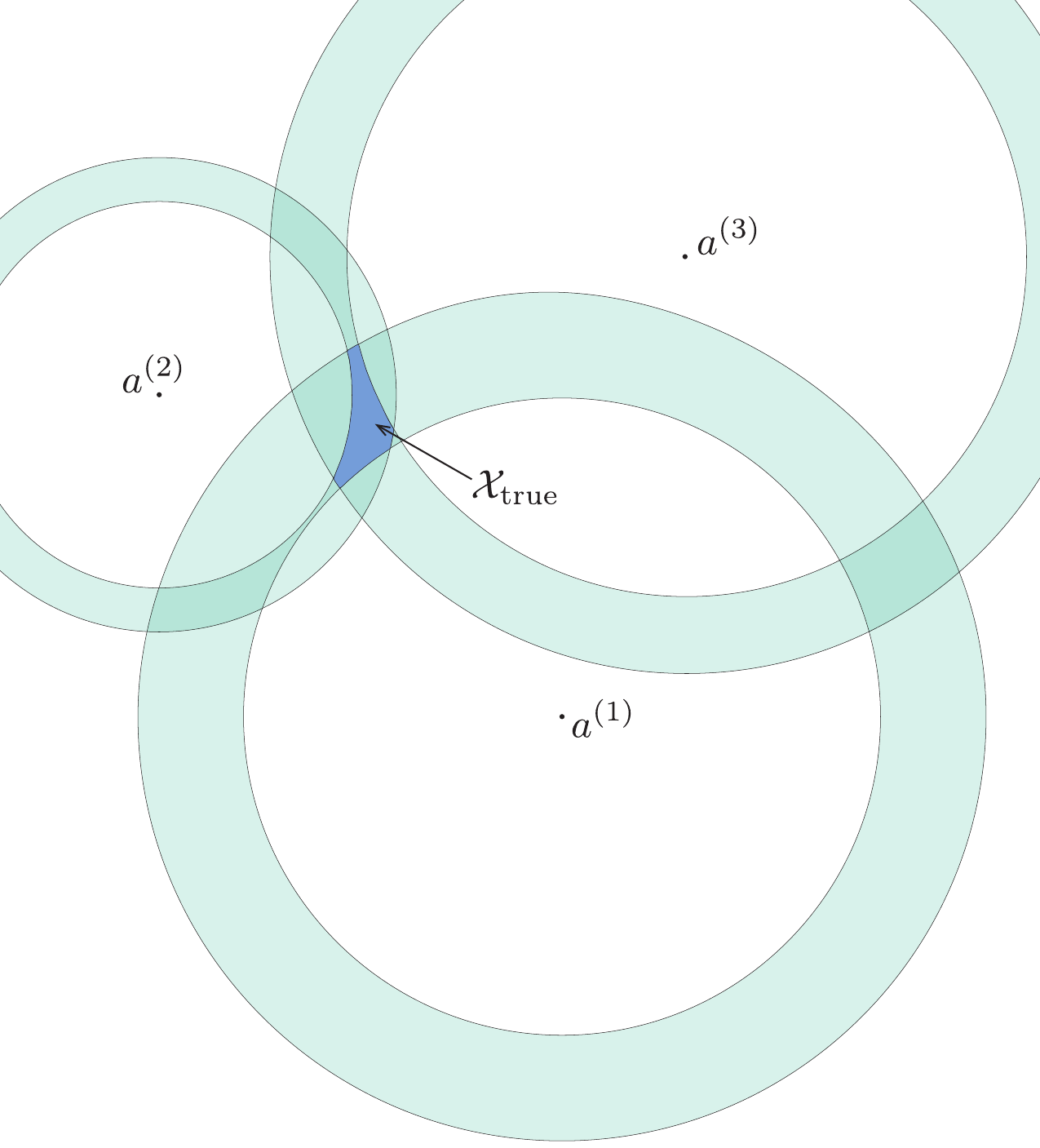}
  \caption{Localization with three anchors in $\R^2$. \label{fig_3anchors}}
\end{center}
\end{figure}

Now, for any pair of measurements $1\le i<j\le m$
we consider the difference of measurements equations \eqref{eq:model}, obtaining
\[
2(a^{(j)}-a^{(i)})\tran x = (\omega_j-\omega_i) - (\xi_j-\xi_i),
\]
which are affine in $x$, and in which we defined $\omega_i\doteq \norm{a^{(i)}}_2^2$ for all $i$.
Stacking all $p={m(m-1)}/{2}$ pairwise differences gives
\[
2EA\tran x = E(\omega-\xi),
\]
where $E\in\R^{p\times m}$ has rows $(e^{(i)}-e^{(j)})\tran$ for pairs $(i,j)$, $i<j$.
Equivalently (see \cite{CalafioreSIAM}), $E\tran$ is the incidence matrix of the complete {\em difference-of-measurements graph}, and it holds that $E\tran E = mQ_m$, where
$Q_m$ is the centering projector
\[
Q_m \doteq I_m - \frac{1}{m}\one\one\tran.
\]
 Let
\[
H=[\xi^-,\xi^+] = \{\xi\in\R^m: \xi^-\le\xi\le\xi^+\},
\]
 denote the measurement hyperrectangle. Proposition~3 in \cite{CalafioreSIAM} shows that
the set of all $x$ compatible with all difference-of-measurements equations for some $\xi\in H$ is a polyhedron which admits the explicit representation
\begin{equation}
\begin{aligned}
\Xd = \bigl\{x\in\R^n: \exists\,\alpha\in\R,\;&\xi^- \le Q_m\omega \\
&{}- 2Q_mA\tran x + \alpha\one \le \xi^+ \bigr\}.
\end{aligned}
\label{eq:Xd}
\end{equation}
Moreover, if
\begin{equation}
\Hset = \bigcap_{i=1}^m
\left\{x\in\R^n:\norm{x-a^{(i)}}_2 \le \sqrt{\xi_i^+}\right\},
\label{eq:Hset}
\end{equation}
then the localization set is $\X=\Xd\cap\Hset$ and satisfies $\Xt\subseteq\X\subseteq\Xd$, see \cite{CalafioreSIAM}.
A useful consequence of \eqref{eq:Xd} is that the scalar $\alpha$ can be eliminated explicitly.
Let $q_i\tran$ denote row $i$ of $Q_mA\tran$, and, for a fixed $x$, define the intervals
\[
I_i(x) \doteq [\ell_i(x),u_i(x)],
\]
with
\[
\begin{aligned}
\ell_i(x) &\doteq \xi_i^- - (Q_m\omega)_i + 2q_i\tran x,\\
 u_i(x) &\doteq \xi_i^+ - (Q_m\omega)_i + 2q_i\tran x.
\end{aligned}
\]
Then \eqref{eq:Xd} is equivalent to the existence of one scalar $\alpha$ such that $\alpha\in I_i(x)$ for all $i$, that is,
\[
\bigcap_{i=1}^m I_i(x) \neq \varnothing.
\]
Since the $I_i(x)$ are intervals on the real line, a common intersection exists if and only if every lower endpoint is no larger than every upper endpoint, namely
\[
\ell_i(x) \le u_j(x), \qquad i,j=1,\ldots,m.
\]
Substituting the expressions above yields
\[
2(q_i-q_j)\tran x
\le
\xi_j^+ - \xi_i^- - (Q_m\omega)_j + (Q_m\omega)_i.
\]
Defining
\[
\beta_{ij} \doteq \xi_j^+ - \xi_i^- - (Q_m\omega)_j + (Q_m\omega)_i,
\]
we obtain the explicit halfspace representation
\begin{equation}
\Xd = \{x\in\R^n: 2(q_i-q_j)\tran x \le \beta_{ij},\;
i,j=1,\ldots,m\}.
\label{eq:Xdhalf}
\end{equation}
In other words, $\alpha$ only plays the role of a common shift aligning the $m$ scalar intervals $I_i(x)$;
eliminating $\alpha$ amounts to enforcing pairwise overlap of those intervals. Representation \eqref{eq:Xdhalf} is convenient in the optimization problems of Section~V.
The outer bounding set $\Xd$, and its refinement $\X$, provide set-membership outer-bounding estimates of the true set $\Xt$ of points compatible with the measurements.
In what follows we analyze the  dependence of $\Xd$ and  of $\X$ from the anchors' geometry.

\begin{figure*}[t]
\centering
\includegraphics[width=0.96\textwidth]{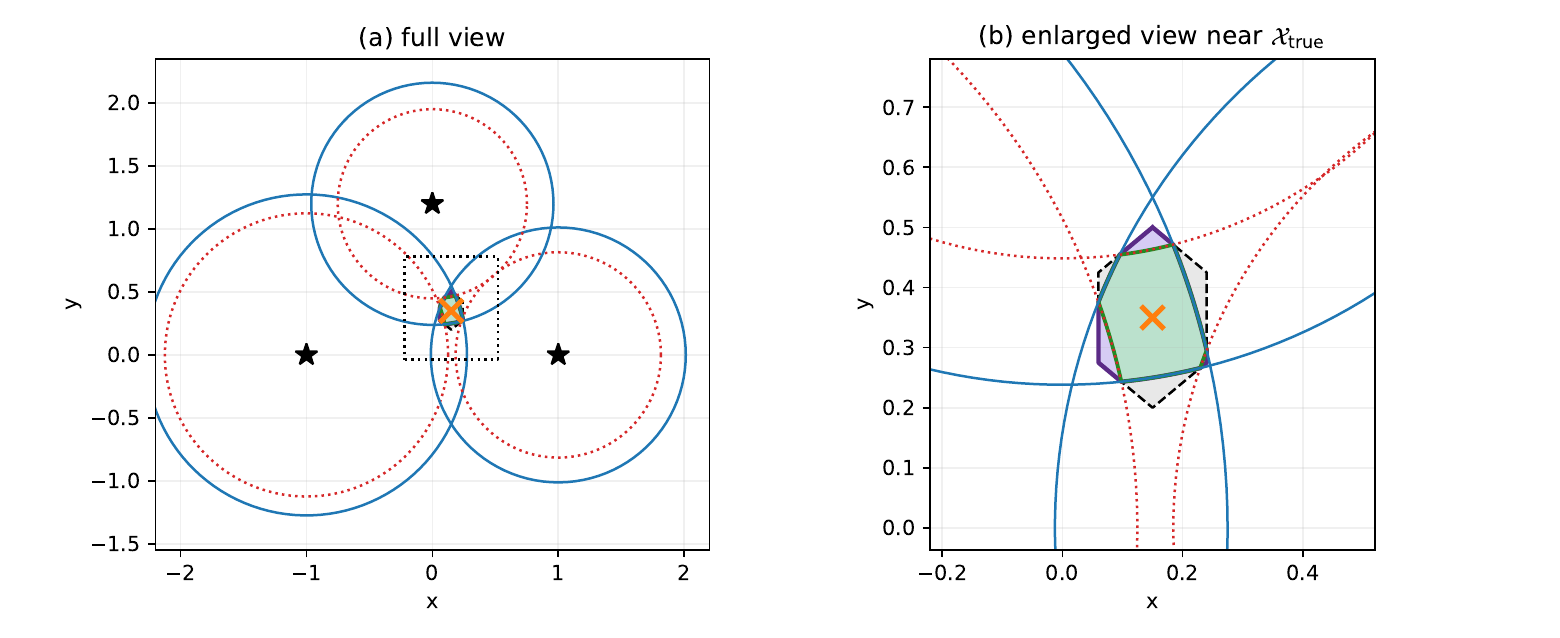}
\caption{Illustration of the localization construction on a three-anchor example. For visual clarity, this figure uses larger interval widths than the numerical experiments of Section~V, so that the annuli and the nested sets are easier to distinguish.
In the left panel, the blue solid circles and red dotted circles are the upper and lower range boundaries, the stars are the anchors, and the orange cross is the true target.
The black dashed polygon is $\Xd$, the purple region is $\X=\Xd\cap\Hset$, and the green region is the true feasible set $\Xt$.
The right panel enlarges the neighborhood of $\Xt$.}
\label{fig:method}
\end{figure*}

\section{Geometry scores and $\Hset$-aware certificates}
A relevant object in our analysis is the centered scatter matrix
\begin{equation}
S(A) \doteq AQ_mA\tran = \sum_{i=1}^m (a^{(i)}-\bar a)(a^{(i)}-\bar a)\tran,
\label{eq:S}
\end{equation}
where $\bar a = \frac{1}{m}\sum_{i=1}^m a^{(i)}$ is the anchors' centroid.
The matrix $S(A)$ is translation invariant, and it is positive definite exactly when the anchors do not lie in a proper affine subspace of $\R^n$.
Let
\[
w \doteq \xi^+ - \xi^- \in \R^m
\]
be the vector of interval widths.
For a compact set $\Omega\subset\R^n$ and a unit vector $v\in\R^n$, define
\[
\widthop_v(\Omega) \doteq \max_{x\in\Omega} v\tran x - \min_{x\in\Omega} v\tran x.
\]
When $S(A)\succ0$ and $\Xd\neq\varnothing$, the argument below also shows that $\Xd$ is bounded;
hence the directional widths of both $\Xd$ and $\X$ are finite in the nondegenerate case of interest.

\begin{proposition}
\label{prop:dirbound}
Assume $S(A)\succ0$.
Then for every unit vector $v\in\R^n$,
\begin{align}
\widthop_v(\X)
&\le \widthop_v(\Xd) \nonumber\\
&\le \frac{1}{2} w\tran \big|Q_mA\tran S(A)^{-1}v\big| \nonumber\\
&\le \frac{1}{2}\norm{w}_2\sqrt{v\tran S(A)^{-1}v}.
\label{eq:dirbound}
\end{align}
\end{proposition}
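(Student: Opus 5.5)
The first inequality follows at once from $\X\subseteq\Xd$. Maximizing $v\tran x$ over a smaller set can only decrease the maximum, and minimizing it can only increase the minimum, so $\widthop_v(\X)\le\widthop_v(\Xd)$.

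For the second inequality I would work with the parametrization \eqref{eq:Xd}. Take any two points $x,y\in\Xd$, with associated shifts $\alpha,\beta$. Write
\[
\xi(x)\doteq Q_m\omega-2Q_mA\tran x+\alpha\one\in H,\qquad \xi(y)\doteq Q_m\omega-2Q_mA\tran y+\beta\one\in H,
\]
and subtract to get
\[
2Q_mA\tran(x-y)=(\xi(y)-\xi(x))+(\alpha-\beta)\one .
\]
Now multiply on the left by $A$. Using $A Q_m A\tran=S(A)$ and $Q_m\one=0$, the $\one$ term cannot simply be dropped, since $A\one\neq0$ in general. Instead I would multiply by $AQ_m$, which is legitimate because $Q_m^2=Q_m$. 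This kills $(\alpha-\beta)\one$ and gives
\[
x-y=\tfrac12 S(A)^{-1}AQ_m\,\delta,\qquad \delta\doteq\xi(y)-\xi(x).
\]
Hence
\[
v\tran(x-y)=\tfrac12 (Q_mA\tran S(A)^{-1}v)\tran\delta .
\]
Since both $\xi(x)$ and $\xi(y)$ lie in the box $H$, each component satisfies $|\delta_i|\le w_i$. Therefore
\[
v\tran(x-y)\le\tfrac12 w\tran|Q_mA\tran S(A)^{-1}v| .
\]
Taking the supremum over $x,y\in\Xd$ yields the bound on $\widthop_v(\Xd)$. The same estimate, applied with $v$ ranging over all unit vectors, also shows that $\Xd$ is bounded, as announced before the statement. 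Closedness of $\Xd$ follows from its halfspace description \eqref{eq:Xdhalf}, so $\Xd$ is compact and the max/min in the definition of $\widthop_v$ are attained.

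For the third inequality I would use Cauchy--Schwarz:
\[
w\tran|z|\le\norm{w}_2\norm{z}_2,\qquad z\doteq Q_mA\tran S(A)^{-1}v .
\]
The norm of $z$ can be computed exactly:
\[
\norm{z}_2^2=v\tran S(A)^{-1}AQ_mQ_mA\tran S(A)^{-1}v=v\tran S(A)^{-1}S(A)S(A)^{-1}v=v\tran S(A)^{-1}v,
\]
again by $Q_m\tran=Q_m=Q_m^2$.

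The only step needing care is the elimination of the free shift $\alpha$. The key observation is that one must left-multiply by $AQ_m$ rather than by $A$, since only the former annihilates $\one$ while still producing $S(A)$. Everything else is routine.
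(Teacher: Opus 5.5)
Your proposal is correct and matches the paper's proof: the paper applies $Q_m$ to remove the $\alpha_\ell\one$ terms and then left-multiplies by $A$, which is your single left-multiplication by $AQ_m$. The rest is the same, namely the componentwise bound $|\delta_i|\le w_i$ followed by Cauchy--Schwarz with the identity $\norm{Q_mA\tran S(A)^{-1}v}_2^2=v\tran S(A)^{-1}v$.
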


\begin{proof}
The first inequality follows from $\X\subseteq\Xd$. Pick any $x_1,x_2\in\Xd$. By \eqref{eq:Xd}, there exist $\alpha_1,\alpha_2\in\R$ and $\xi_1,\xi_2\in H$ such that
\[
\xi_\ell = Q_m\omega - 2Q_mA\tran x_\ell + \alpha_\ell\one,
\qquad \ell=1,2.
\]
Applying $Q_m$ to both sides and subtracting gives
\[
2Q_mA\tran(x_1-x_2) = Q_m(\xi_2-\xi_1).
\]
Multiplying on the left by $A$ and using $AQ_mA\tran=S(A)$ yields
\[
x_1-x_2 = \frac{1}{2}S(A)^{-1}AQ_m(\xi_2-\xi_1).
\]
Let $\delta=\xi_2-\xi_1$.
Since $\xi_1,\xi_2\in H$, we have $|\delta_i|\le w_i$ for all $i$, hence
\[
|v\tran(x_1-x_2)|
\le \frac{1}{2}w\tran \big|Q_mA\tran S(A)^{-1}v\big|.
\]
Taking the supremum over $x_1,x_2\in\Xd$ proves the second inequality in \eqref{eq:dirbound}.
The last inequality follows from Cauchy-Schwarz and
\begin{equation}
\begin{aligned}
\norm{Q_mA\tran S(A)^{-1}v}_2^2
&= v\tran S(A)^{-1}AQ_mA\tran S(A)^{-1}v \\
&= v\tran S(A)^{-1}v .
\end{aligned}
\label{eq:quadnorm}
\end{equation}
\end{proof}

\begin{corollary}
\label{cor:diam}
Under the assumptions of Proposition~\ref{prop:dirbound},
\begin{equation}
\diam(\X) \le \diam(\Xd) \le
\frac{\norm{w}_2}{2\sqrt{\lambda_{\min}(S(A))}}.
\label{eq:diam}
\end{equation}
\end{corollary}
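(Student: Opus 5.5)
The corollary follows from Proposition~\ref{prop:dirbound} by taking the worst direction. The first inequality, $\diam(\X)\le\diam(\Xd)$, is immediate from $\X\subseteq\Xd$. If $\X$ or $\Xd$ is empty there is nothing to prove, using the convention $\diam(\varnothing)=0$. So we assume $\Xd\neq\varnothing$; as noted before Proposition~\ref{prop:dirbound}, it is then bounded, and it is closed because it is a polyhedron. Hence it is compact and the diameter is attained.

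The key step is to write the diameter as the largest directional width. For a compact set $\Omega$ and any $x_1,x_2\in\Omega$ with $x_1\neq x_2$, take the unit vector $v=(x_1-x_2)/\norm{x_1-x_2}_2$. Then
\[
\norm{x_1-x_2}_2 = v\tran(x_1-x_2)\le \widthop_v(\Omega).
\]
Conversely, Cauchy--Schwarz gives $\widthop_v(\Omega)\le\diam(\Omega)$ for every unit $v$. Together these give
\[
\diam(\Omega)=\max_{\norm{v}_2=1}\widthop_v(\Omega).
\]
We only need the first direction.

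Next, apply the last inequality of \eqref{eq:dirbound} with $\Omega=\Xd$ and maximize its right-hand side over unit $v$. The Rayleigh quotient gives
\[
\max_{\norm{v}_2=1} v\tran S(A)^{-1}v=\lambda_{\max}(S(A)^{-1})=\frac{1}{\lambda_{\min}(S(A))},
\]
which is valid because $S(A)\succ0$. Combining the two steps yields
\[
\diam(\Xd)\le \frac{\norm{w}_2}{2\sqrt{\lambda_{\min}(S(A))}}.
\]

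No step is a real obstacle. The only point needing care is reducing the diameter to directional widths, which requires the width to be defined with max/min over a compact set. Alternatively, one can avoid compactness by working directly with the displacement formula from the proof of Proposition~\ref{prop:dirbound}. That formula gives
\[
\norm{x_1-x_2}_2=\tfrac12\norm{S(A)^{-1}AQ_m\delta}_2 .
\]
By \eqref{eq:quadnorm}, the matrix $S(A)^{-1}AQ_m$ has operator norm
\[
\sqrt{\lambda_{\max}(S(A)^{-1})}=\frac{1}{\sqrt{\lambda_{\min}(S(A))}} .
\]
Since $\norm{\delta}_2\le\norm{w}_2$, this yields the same bound.
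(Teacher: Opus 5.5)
Your proposal is correct and follows the paper's proof: take the supremum over unit $v$ of the last bound in Proposition~\ref{prop:dirbound} and use $\max_{\norm{v}_2=1} v\tran S(A)^{-1}v = 1/\lambda_{\min}(S(A))$. You also make explicit the identity $\diam(\Xd)=\max_{\norm{v}_2=1}\widthop_v(\Xd)$, which the paper leaves implicit, and your alternative operator-norm argument, based on $S(A)^{-1}AQ_m\bigl(S(A)^{-1}AQ_m\bigr)\tran = S(A)^{-1}$, is equally valid.
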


\begin{proof}
Take the supremum of the last inequality in \eqref{eq:dirbound} over all unit vectors $v$, and use
\[
\max_{\norm{v}_2=1} v\tran S(A)^{-1}v
= \lambda_{\max}(S(A)^{-1})
= \frac{1}{\lambda_{\min}(S(A))}.
\]
\end{proof}

\begin{proposition}
\label{prop:ball}
Assume $\Hset\neq\varnothing$. For any simplex weight
\[
p\in\Delta_m \doteq \{p\in\R^m: p\ge0,\ \one\tran p=1\},
\]
define
\[
c(p) \doteq Ap
\]
and
\begin{equation}
\begin{aligned}
\rho(p)^2
&\doteq p\tran(\xi^+-\omega) + \norm{Ap}_2^2 \\
&= \sum_{i=1}^m p_i\xi_i^+
-\frac{1}{2}\sum_{i,j=1}^m p_i p_j
\norm{a^{(i)}-a^{(j)}}_2^2 .
\end{aligned}
\label{eq:rhop}
\end{equation}
Then
\begin{equation}
\X \subseteq \Hset \subseteq
\{x\in\R^n:\norm{x-c(p)}_2 \le \rho(p)\}.
\label{eq:ballcontain}
\end{equation}
Consequently,
\begin{equation}
\diam(\X) \le 2\rho_\star,
\qquad
\rho_\star \doteq \min_{p\in\Delta_m}\rho(p).
\label{eq:rhostar}
\end{equation}
\end{proposition}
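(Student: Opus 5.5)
The plan is to aggregate the $m$ ball constraints defining $\Hset$ with the convex weights $p$, so that the quadratic terms in $x$ combine into a single term $\norm{x}_2^2$. Fix $p\in\Delta_m$ and any $x\in\Hset$. By \eqref{eq:Hset}, $\norm{x-a^{(i)}}_2^2\le\xi_i^+$ for every $i$. Multiplying by $p_i\ge0$ and summing, using $\one\tran p=1$, gives
\[
\norm{x}_2^2 - 2x\tran Ap + p\tran\omega \le p\tran\xi^+ .
\]
Completing the square around $c(p)=Ap$ turns this into
\[
\norm{x-Ap}_2^2\le p\tran(\xi^+-\omega)+\norm{Ap}_2^2=\rho(p)^2 .
\]
This yields the second inclusion in \eqref{eq:ballcontain}. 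The first inclusion $\X\subseteq\Hset$ is immediate from $\X=\Xd\cap\Hset$.

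Since $\Hset\neq\varnothing$, the left-hand side above is nonnegative for some $x$. Hence $\rho(p)^2\ge0$ and $\rho(p)$ is a well-defined real number.

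Next I would verify the second expression in \eqref{eq:rhop}. Expand
\[
\tfrac12\sum_{i,j}p_ip_j\norm{a^{(i)}-a^{(j)}}_2^2=\tfrac12\sum_{i,j}p_ip_j(\omega_i+\omega_j-2a^{(i)\top}a^{(j)}).
\]
Using $\sum_j p_j=1$, this equals $p\tran\omega-\norm{Ap}_2^2$. Substituting gives $\rho(p)^2=p\tran\xi^+-(p\tran\omega-\norm{Ap}_2^2)$, which matches the first expression.

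For \eqref{eq:rhostar}, every set contained in a ball of radius $\rho(p)$ has diameter at most $2\rho(p)$. So $\diam(\X)\le 2\rho(p)$ for all $p\in\Delta_m$, and we take the infimum over $p$.

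The only point needing care is that the minimum $\rho_\star$ is attained. This holds because $\rho(p)^2$ is a continuous (quadratic) function of $p$ on the compact set $\Delta_m$. It is also nonnegative there, so $\sqrt{\cdot}$ preserves the minimizer. No step seems a real obstacle; the main thing to get right is the completing-the-square identity and the role of $\one\tran p=1$ in it.
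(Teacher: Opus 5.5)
Your proof is correct and follows the paper's route: the paper aggregates the ball constraints with weights $p$ and invokes the weighted variance identity around $c(p)=Ap$, which is exactly your completing-the-square step. Your extra checks (nonnegativity of $\rho(p)^2$ from $\Hset\neq\varnothing$, the explicit pairwise-distance expansion, and attainment of $\rho_\star$ by compactness of $\Delta_m$) fill in details the paper leaves implicit.
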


\begin{proof}
Fix $x\in\Hset$ and $p\in\Delta_m$.
Since
$\norm{x-a^{(i)}}_2^2\le \xi_i^+$ for all $i$,
\[
\sum_{i=1}^m p_i \norm{x-a^{(i)}}_2^2
\le
\sum_{i=1}^m p_i \xi_i^+.
\]
Using the weighted variance identity around $c(p)=Ap$,
\[
\sum_{i=1}^m p_i \norm{x-a^{(i)}}_2^2
=
\norm{x-c(p)}_2^2 +
\sum_{i=1}^m p_i\norm{a^{(i)}-c(p)}_2^2,
\]
which yields \eqref{eq:ballcontain}.
The pairwise-distance expression in \eqref{eq:rhop} is the standard identity for weighted scatter. Minimizing over $p\in\Delta_m$ gives \eqref{eq:rhostar}.
\end{proof}

\begin{remark}
\label{rem:special}
Two explicit choices are useful. For the uniform weight
$p=\frac{1}{m}\one$,
\begin{equation}
\rho(p)^2
=
\frac{1}{m}\sum_{i=1}^m \xi_i^+
-\frac{1}{m}\operatorname{tr}S(A).
\label{eq:uniformrho}
\end{equation}
Hence
\[
\diam(\X)\le
2\sqrt{\frac{1}{m}\sum_{i=1}^m \xi_i^+
-\frac{1}{m}\operatorname{tr}S(A)}.
\]
For the pairwise choice
$p=\tfrac{1}{2}(e_i+e_j)$,
\begin{equation}
\rho(p)^2 =
\frac{\xi_i^+ + \xi_j^+}{2}
-\frac{\norm{a^{(i)}-a^{(j)}}_2^2}{4},
\label{eq:pairrho}
\end{equation}
which gives a simple two-anchor lens bound.
\end{remark}

\begin{proposition}
\label{prop:support}
Assume $\Hset$ has nonempty interior and let
\[
h_{\Hset}(v) \doteq \max_{x\in\Hset} v\tran x
\]
denote the support function of $\Hset$.
Then, for every
$v\in\R^n$,
\begin{equation}
h_{\Hset}(v)
=
\min_{p\in\Delta_m}
\bigl(v\tran Ap + \norm{v}_2\,\rho(p)\bigr).
\label{eq:support}
\end{equation}
Hence, for every unit vector $v$,
\begin{equation}
\widthop_v(\X)
\le \psi_{\Hset}(v)
\doteq h_{\Hset}(v)+h_{\Hset}(-v),
\label{eq:psi}
\end{equation}
and, if $S(A)\succ0$,
\begin{equation}
\widthop_v(\X)
\le
\min\!\left\{
\frac{1}{2}w\tran\abs{Q_mA\tran S(A)^{-1}v},
\ \psi_{\Hset}(v)
\right\}.
\label{eq:hybridwidth}
\end{equation}
\end{proposition}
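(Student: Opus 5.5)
The support-function formula \eqref{eq:support} will follow from Lagrangian duality for the convex program $\max\{v\tran x:\ \norm{x-a^{(i)}}_2^2\le\xi_i^+,\ i=1,\dots,m\}$. The case $v=0$ is trivial, since both sides vanish because $\rho(p)\ge 0$ whenever $\Hset\neq\varnothing$ by Proposition~\ref{prop:ball}. So fix $v\neq0$.

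The upper bound ``$\le$'' is immediate from Proposition~\ref{prop:ball}. For every $p\in\Delta_m$ we have $\Hset\subseteq\{x:\norm{x-c(p)}_2\le\rho(p)\}$, and the support function of that ball is $v\tran Ap+\norm{v}_2\rho(p)$. Hence $h_{\Hset}(v)$ is bounded by the minimum over $p$.

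For the reverse inequality I will use strong duality. Nonempty interior of $\Hset$ gives a Slater point, and the problem is a convex quadratically constrained program with compact feasible set. Therefore
\[
h_{\Hset}(v)=\min_{\lambda\ge0}\max_x\Bigl(v\tran x-\sum_i\lambda_i(\norm{x-a^{(i)}}_2^2-\xi_i^+)\Bigr).
\]
Since $v\neq0$, the inner max is $+\infty$ when $\lambda=0$, so I may write $\lambda=t p$ with $t=\one\tran\lambda>0$ and $p\in\Delta_m$.

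By the weighted variance identity used in the proof of Proposition~\ref{prop:ball},
\[
\sum_i p_i\norm{x-a^{(i)}}_2^2-\sum_ip_i\xi_i^+=\norm{x-Ap}_2^2-\rho(p)^2 .
\]
The inner maximization over $x$ is then an explicit concave quadratic, attained at $x=Ap+v/(2t)$, with value
\[
v\tran Ap+\frac{\norm{v}_2^2}{4t}+t\rho(p)^2 .
\]
Minimizing over $t>0$ by AM--GM gives $v\tran Ap+\norm{v}_2\rho(p)$. This requires $\rho(p)^2\ge0$, which holds, and the case $\rho(p)=0$ is handled as an infimum. Minimizing over $p\in\Delta_m$, which is compact with a continuous objective, then yields \eqref{eq:support} with min attained.

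The main obstacle is justifying strong duality and the attainment. This is exactly where the interior assumption enters, and I will cite the standard Slater theorem for convex programs.

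The remaining claims are short. For a unit $v$, $\X\subseteq\Hset$ gives $\max_{\X}v\tran x\le h_{\Hset}(v)$ and $-\min_{\X}v\tran x\le h_{\Hset}(-v)$. Adding these gives \eqref{eq:psi}. Combining \eqref{eq:psi} with the second inequality of Proposition~\ref{prop:dirbound} gives the hybrid bound \eqref{eq:hybridwidth}.
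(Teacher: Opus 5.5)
Your proposal is correct and follows essentially the same route as the paper: Lagrangian duality for $\max\{v\tran x : x\in\Hset\}$ under Slater's condition, the substitution $\lambda=tp$ with $p\in\Delta_m$, and minimization over $t>0$ to obtain $v\tran Ap+\norm{v}_2\rho(p)$, followed by $\X\subseteq\Hset$ and Proposition~\ref{prop:dirbound} for the width bounds. Your additions are harmless refinements of the same argument: you treat $v=0$ separately, and you read off the ``$\le$'' direction from Proposition~\ref{prop:ball}, which is weak duality. Also, nonempty interior forces $\rho(p)>0$ for every $p$, since a ball of positive radius lies inside each enclosing ball, so the $\rho(p)=0$ case never actually arises.
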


\begin{proof}
Consider the SOCP
\[
\max_x\{v\tran x:
\norm{x-a^{(i)}}_2^2\le \xi_i^+,\ i=1,\ldots,m\}.
\]
Its Lagrangian dual is
\[
\min_{\mu\ge0,\ t=\one\tran\mu>0}
\left\{
\mu\tran(\xi^+-\omega)
+\frac{1}{4t}\norm{v+2A\mu}_2^2
\right\}.
\]
Under the Slater assumption for $\Hset$, strong duality holds.
Writing
$\mu=t p$ with $p\in\Delta_m$ gives
\[
v\tran Ap + \frac{\norm{v}_2^2}{4t} + t\rho(p)^2.
\]
Minimizing over $t>0$ yields \eqref{eq:support}, with minimum value
$v\tran Ap+\norm{v}_2\rho(p)$.
Since
$\widthop_v(\Hset)=h_{\Hset}(v)+h_{\Hset}(-v)$
and $\X\subseteq\Hset$, \eqref{eq:psi} follows. Combining \eqref{eq:psi} with Proposition~\ref{prop:dirbound} gives \eqref{eq:hybridwidth}.
\end{proof}

\begin{remark}
\label{rem:beta}
Formula \eqref{eq:support} reduces the support computation to a low-dimensional simplex-constrained optimization;
in practice one may also solve the equivalent primal SOCP directly.
A simpler closed-form surrogate follows from the support of each
individual ball:
\[
h_{\Hset}(v) \le \min_i\{v\tran a^{(i)} + \sqrt{\xi_i^+}\}.
\]
Hence
\begin{equation}
\begin{aligned}
\psi_{\Hset}(v)
&\le
\beta_{\Hset}(v) \\
&\doteq
\min_i\{v\tran a^{(i)}+\sqrt{\xi_i^+}\} \\
&\quad -
\max_i\{v\tran a^{(i)}-\sqrt{\xi_i^+}\}.
\end{aligned}
\label{eq:betaH}
\end{equation}
\end{remark}

\begin{corollary}
\label{cor:hybridbox}
Assume $S(A)\succ0$ and let
$S(A)=U\Lambda U\tran$
with eigenvectors $u_1,\ldots,u_n$ and eigenvalues
$\lambda_1,\ldots,\lambda_n>0$. Define
\begin{equation}
b_j \doteq
\min\!\left\{
\frac{\norm{w}_2}{2\sqrt{\lambda_j}},
\ \psi_{\Hset}(u_j)
\right\},
\qquad j=1,\ldots,n.
\label{eq:bj}
\end{equation}
Then $\X$ is contained in the orthogonal box aligned with $U$
and side lengths $b_j$, hence
\begin{equation}
\diam(\X) \le \Big(\sum_{j=1}^n b_j^2\Big)^{1/2},
\qquad
\vol(B_U) \le \prod_{j=1}^n b_j.
\label{eq:hybridbox}
\end{equation}
Discarding the $\psi_{\Hset}$ term recovers
\begin{equation}
\vol(B_U)
\le
\left(\frac{\norm{w}_2}{2}\right)^n
\frac{1}{\sqrt{\det S(A)}}.
\label{eq:detbound}
\end{equation}
\end{corollary}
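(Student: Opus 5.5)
The plan is to apply the hybrid directional width bound \eqref{eq:hybridwidth} along each principal direction $u_j$ of $S(A)$. Then I would turn the resulting per-direction widths into a box containment, and read off diameter and volume from the box.

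First, fix $j$ and take $v=u_j$, a unit vector. The last inequality in \eqref{eq:dirbound} gives $\widthop_{u_j}(\X)\le \frac12\norm{w}_2\sqrt{u_j\tran S(A)^{-1}u_j}$. Since $S(A)^{-1}u_j=\lambda_j^{-1}u_j$, this equals $\norm{w}_2/(2\sqrt{\lambda_j})$. Proposition~\ref{prop:support}, via \eqref{eq:psi}, gives $\widthop_{u_j}(\X)\le\psi_{\Hset}(u_j)$. So $\widthop_{u_j}(\X)\le b_j$. This needs $\Hset$ to have nonempty interior. Where that fails, I would note that \eqref{eq:psi} still holds directly from the definition $\widthop_v(\Hset)=h_\Hset(v)+h_\Hset(-v)$ and $\X\subseteq\Hset$; only formula \eqref{eq:support} uses Slater.

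Next, set $m_j=\min_{x\in\X}u_j\tran x$, which is finite since $\X$ is compact. Then every $x\in\X$ satisfies $m_j\le u_j\tran x\le m_j+b_j$ for all $j$. Hence $\X\subseteq B_U\doteq\{x: u_j\tran x\in[m_j,m_j+b_j],\ j=1,\dots,n\}$, which is an orthogonal box aligned with $U$ with sides $b_j$. Any two points $x,y$ of $B_U$ have $\norm{x-y}_2^2=\sum_j (u_j\tran(x-y))^2\le\sum_j b_j^2$ because $U$ is orthogonal. This gives the diameter bound in \eqref{eq:hybridbox}. The volume of $B_U$ is $\prod_j b_j$, since an orthogonal change of coordinates preserves volume. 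I read the statement's $\vol(B_U)\le\prod_j b_j$ as this equality, which also implies $\vol(\X)\le\prod_j b_j$.

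Finally, discarding $\psi_{\Hset}$ means using $b_j\le \norm{w}_2/(2\sqrt{\lambda_j})$. Then $\prod_j b_j\le(\norm{w}_2/2)^n\prod_j\lambda_j^{-1/2}=(\norm{w}_2/2)^n/\sqrt{\det S(A)}$, which is \eqref{eq:detbound}.

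None of these steps is difficult. The only delicate point is bookkeeping: $B_U$ must be defined by translating the box to the actual extents of $\X$ (or of $\Xd$), so that the side lengths match the width bounds rather than being doubled as they would be for a box centered at a fixed point.
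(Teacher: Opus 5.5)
Your proof is correct and is essentially the argument the paper intends, since the corollary is stated without a written proof: you evaluate the width bounds of Propositions~\ref{prop:dirbound} and~\ref{prop:support} at $v=u_j$ using $u_j\tran S(A)^{-1}u_j=1/\lambda_j$, translate the box to the actual extents of $\X$, and read off the diameter from the orthogonality of $U$ and the volume as $\prod_j b_j$. Your remark that \eqref{eq:psi} needs only $\X\subseteq\Hset$, and not the Slater condition used for \eqref{eq:support}, is a useful clarification, because the corollary itself does not assume that $\Hset$ has nonempty interior.
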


The bounds above reveal two distinct mechanisms.
The matrix
$S(A)$ quantifies how the difference equations enlarge the
polyhedral outer set $\Xd$, whereas the $\Hset$-induced terms
$\rho_\star$ and $\psi_{\Hset}$ quantify how the intersection with
the upper balls truncates that enlargement.
In particular,
Proposition~\ref{prop:ball} remains finite whenever $\Hset$ is
nonempty, even when the $\Xd$-based diameter bound in
Corollary~\ref{cor:diam} deteriorates severely.
\section{Geometry-based anchor selection}
Consider a pool of candidate anchors
$\{b^{(1)},\ldots,b^{(N)}\}$ and suppose that only $k<N$ of them can be used.
For a subset
$\mathcal{I}\subset\{1,\ldots,N\}$ of cardinality $k$, let
$A_{\mathcal{I}}$ be the corresponding anchor matrix and let
$S_{\mathcal{I}} = A_{\mathcal{I}}Q_kA_{\mathcal{I}}\tran$.
The previous section yields two natural \emph{offline design scores}:
\begin{equation}
\gamma_E(\mathcal{I}) \doteq \lambda_{\min}(S_{\mathcal{I}}),
\qquad
\gamma_D(\mathcal{I}) \doteq \det S_{\mathcal{I}}.
\label{eq:EDscores}
\end{equation}
These scores are meaningful only for affinely spanning subsets: if $S_{\mathcal{I}}$ is rank deficient, then $\gamma_E(\mathcal{I})=\gamma_D(\mathcal{I})=0$ and the associated $\Xd$-based directional certificates are not finite.
When all interval widths are equal and the subset size $k$ is fixed,
maximizing $\gamma_E$ minimizes the $\Xd$-based diameter bound
\eqref{eq:diam}, whereas maximizing $\gamma_D$ minimizes the
polyhedral principal-axis volume bound \eqref{eq:detbound}.
If the
widths are not uniform, the corresponding weighted scores become
\begin{equation}
J_E(\mathcal{I}) \doteq
\frac{\|w_{\mathcal{I}}\|_2^2}{\lambda_{\min}(S_{\mathcal{I}})},
\qquad
J_D(\mathcal{I}) \doteq
\frac{\|w_{\mathcal{I}}\|_2^{2n}}{\det S_{\mathcal{I}}}.
\label{eq:weighted_scores}
\end{equation}
The notation E/D is reminiscent  of the stochastic 
setting, but here  $\lambda_{\min}(S_{\mathcal{I}})$ and $\det S_{\mathcal{I}}$ act as deterministic E- and D-optimality criteria: they bound guaranteed-set surrogates rather than local covariance matrices.
The quantities $\rho_\star$ and $\psi_{\Hset}$ remain measurement-dependent and are therefore online certificates for an already selected subset.
Likewise, the weighted scores in \eqref{eq:weighted_scores} are genuinely offline only when conservative width envelopes $w_{\mathcal{I}}$ are known a priori;
otherwise \eqref{eq:EDscores} are the natural pre-measurement selectors. The E-score targets worst-direction ambiguity, whereas the D-score reflects overall spread.
In moderate-size pools both can be optimized exhaustively; for larger pools they can be embedded in greedy or branch-and-bound schemes.

\begin{remark}[Scalability and Combinatorial Search]\rm 
Exact offline anchor selection is inherently a combinatorial problem, requiring the evaluation of $\binom{N}{k}$ subsets. However, this is rarely a limiting factor in practice for two reasons. First, in standard applications like GPS or drone localization, the total number of available anchors $N$ is typically small (for example, in the range 4 to 12 in GPS localization), making exhaustive search computationally viable. Second, for larger pools, suboptimal greedy addition or branch-and-bound schemes can efficiently approximate the E- or D-scores. Furthermore, it is important to note that the \emph{online} certification phase (evaluating $\rho_\star$ and $\psi_{\Hset}$) is strictly non-combinatorial; it simply evaluates the properties of the single $k$-subset that has been selected in the offline phase.
\end{remark}

\section{Numerical validation}
Experiments in this section are performed  in $\R^2$ and use a squared-range interval model
\begin{equation}
\xi_i \in [d_i^2-\delta,\, d_i^2+\delta],
\qquad
d_i = \norm{x^\star-a^{(i)}}_2,
\label{eq:numinterval}
\end{equation}
with $\delta=0.05$, i.e., every interval width is equal to $0.1$.
All computations below use the actual localization set
$\X=\Xd\cap\Hset$.
Since $\X$ is convex and is described by the linear
inequalities \eqref{eq:Xdhalf} together with the ball constraints
\eqref{eq:Hset}, its support in any direction $v$ is obtained from the
convex program
\[
\max_x\{v\tran x : x\in\Xd,\ \norm{x-a^{(i)}}_2 \le \sqrt{\xi_i^+},\ i=1,\ldots,m\},
\]
which is a small convex second-order cone program (SOCP).
Therefore the exact coordinate-aligned bounding
box of $\X$ is obtained by solving, for each coordinate vector $e_j$, the
pair of problems $\max e_j\tran x$ and $\min e_j\tran x$ over $\X$.
In $\R^2$ this means four convex programs. While the numerical experiments in this section are visually presented in $\R^2$, the theoretical results and methodology are not restricted to $n=2$ and scale seamlessly to 3D or generic $\R^n$. For a 3D application (such as UAV/drone localization), computing the exact coordinate box requires solving exactly $6$ small convex programs, and generally $2n$ such problems in $\R^n$.

\subsection{Anchor subset selection as offline design}
We first test the proposed geometry scores on a combinatorial
selection problem.
In each Monte Carlo trial, $N=8$ candidate anchors are sampled
uniformly in the annulus of radii $1$ and $2$ around the target, and
$k=4$ anchors must be selected.
Since $\binom{8}{4}=70$, we enumerate
all subsets. For every subset we compute the exact area of the
coordinate-aligned outer box of $\X$, together with the D-score
$\gamma_D=\det S$ and the E-score
$\gamma_E=\lambda_{\min}(S)$ from \eqref{eq:EDscores}.
We compare four policies
over $60$ trials: the oracle subset minimizing the exact box area, the
subset maximizing $\gamma_D$, the subset maximizing $\gamma_E$, and
the mean area over all subsets.
Table~\ref{tab:selection} summarizes the results. Both geometry rules
are close to the oracle, with the determinant-based score performing
slightly better on this area-oriented metric: its mean box area is only
$4.1\%$ above the oracle, while the E-score is $6.2\%$ above it.
Both
rules are dramatically better than an average subset, whose mean box
area is about four times larger than the oracle.
In $25$ of the
$60$ trials the D-score recovers the oracle subset exactly; the E-score
does so in $24$ trials.
Figure~\ref{fig:cdf} shows that both scores are
within $2\%$ of the oracle in nearly half of the trials.
For the \emph{online} certificate, let
\[
\begin{aligned}
B_{d,j} &\doteq \frac{1}{2}w\tran\big|Q_mA\tran S(A)^{-1}e_j\big|,\\
B_{\Hset,j} &\doteq \psi_{\Hset}(e_j), \qquad j=1,2.
\end{aligned}
\]
Because $\X\subseteq\Xd$ and $\X\subseteq\Hset$, the exact coordinate
width of $\X$ along axis $e_j$ is no larger than either quantity.
We
therefore form the hybrid coordinate-box bound
\[
B_{\mathrm{hyb}} \doteq \prod_{j=1}^2 \min\{B_{d,j},B_{\Hset,j}\}.
\]
In each coordinate, we start from the $\Xd$-based width estimate and, if
the $\Hset$-based width is smaller, we replace it by that tighter value.
Figure~\ref{fig:scatter} plots $B_{\mathrm{hyb}}$ against the exact area
of the coordinate-aligned box of $\X$.
Both axes are logarithmic because
the values span several orders of magnitude across the $4200$ subsets.
The reported log-log correlation is simply the Pearson correlation
between $\log_{10} B_{\mathrm{hyb}}$ and the logarithm of the exact box
area;
values close to one mean that the bound correctly tracks which
subsets produce small and large uncertainty boxes.
Here the correlation
is $0.984$, while the mean ratio $B_{\mathrm{hyb}}/A_{\mathrm{box}}(\X)$ is
$1.69$ (median $1.50$).
Thus, the new $\Hset$-aware bound is not only
valid, but also predictive of the realized size of the localization set.
\begin{table}[t]
\centering
\caption{Subset selection over $60$ Monte Carlo trials ($N=8$, $k=4$).}
\label{tab:selection}
\footnotesize
\begin{tabular}{@{}lcc@{}}
\toprule
Method & Mean box area of $\X$ & Rel.
to oracle \\
\midrule
Oracle subset & $1.42\times 10^{-3}$ & $1.00$ \\
D-score $\max \det S$ & $1.47\times 10^{-3}$ & $1.04$ \\
E-score $\max \lambda_{\min}(S)$ & $1.50\times 10^{-3}$ & $1.06$ \\
Average subset & $5.85\times 10^{-3}$ & $4.13$ \\
\bottomrule
\end{tabular}
\normalsize
\end{table}

\begin{figure}[t]
\centering
\includegraphics[width=0.97\linewidth]{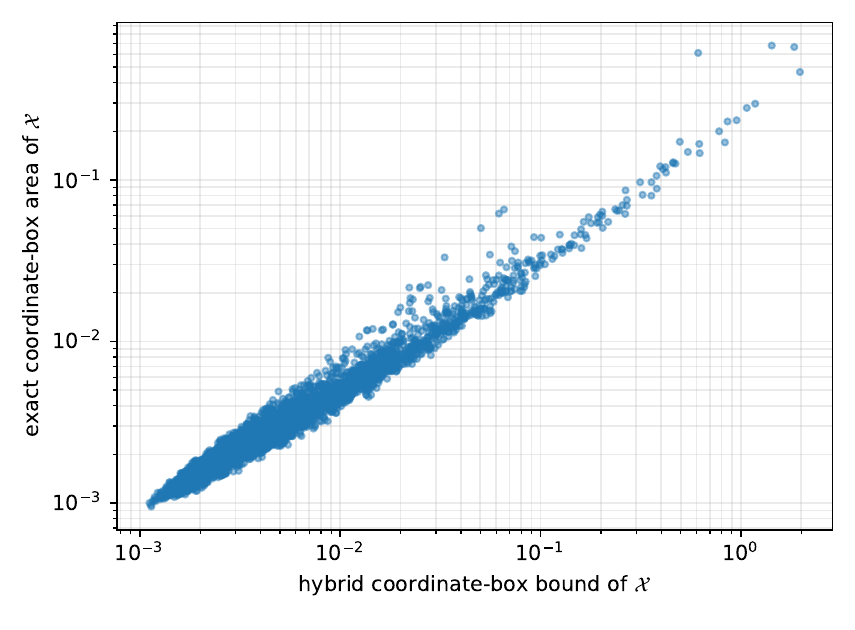}
\caption{Hybrid coordinate-box bound $B_{\mathrm{hyb}}$ versus the exact coordinate-box area of the actual localization set $\X$ over all subsets in the Monte Carlo experiment.
Both axes are on logarithmic scale.}
\label{fig:scatter}
\end{figure}

\begin{figure}[t]
\centering
\includegraphics[width=0.97\linewidth]{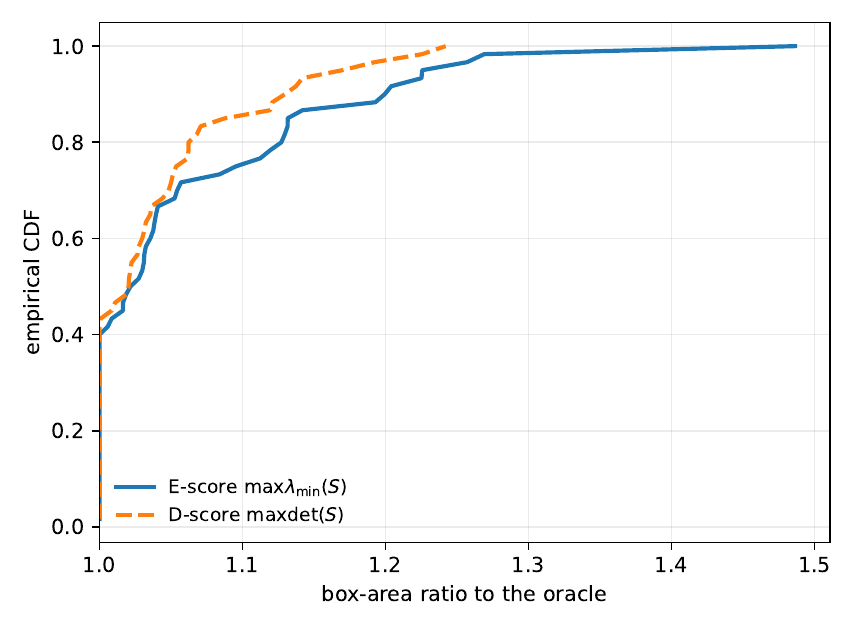}
\caption{Empirical CDF of the box-area ratio to the oracle over the $60$ Monte Carlo trials.
The D-score slightly dominates the E-score on this area-based metric and is therefore the practical default, while both remain close to the oracle.}
\label{fig:cdf}
\end{figure}

\subsection{Why $\Hset$-aware online certification is needed}
To isolate the role of the $\Hset$-aware terms, we next consider three
anchors at $(-1,0)$, $(1,0)$, and $(0,h)$,
with true target $x^\star=(0.15,0.35)$.
The height $h$ is varied from
$1.2$ down to $0.08$, so that the third anchor approaches the line
through the first two.
Figure~\ref{fig:sweep} summarizes the results.

The left panel reports the exact vertical width of $\X$, obtained from
the two support problems $\max e_2\tran x$ and $\min e_2\tran x$ over
$\X$, together with the $\Xd$-based width bound from
Proposition~\ref{prop:dirbound}, the $\Hset$-support bound
$\psi_{\Hset}(e_2)$ from Proposition~\ref{prop:support}, and the hybrid
minimum of the two.
As $h$ decreases from $1.2$ to $0.08$, the exact
vertical width grows
from $7.27\times 10^{-2}$ to $6.86\times 10^{-1}$, while the
$\Xd$-based bound grows from $8.33\times 10^{-2}$ to $1.25$.
The $\Hset$-support bound remains much tighter near degeneracy,
changing from $1.00\times10^{-1}$ to $7.04\times10^{-1}$;
the hybrid
bound therefore follows the polyhedral term in well-spread geometries
and automatically switches to the ball-intersection term when the
anchor set becomes nearly collinear.
The right panel compares the pure polyhedral diameter bound
\eqref{eq:diam} with the ball-induced bound $2\rho_\star$ from
Proposition~\ref{prop:ball}.
Over the same sweep, the former changes
from $8.84\times10^{-2}$ to $1.33$, whereas the latter changes only
from $4.47\times10^{-1}$ to $7.18\times10^{-1}$.
This is precisely the
gap between $\diam(\Xd)$ and the actual size of $\X$: the conditioning
of $S(A)$ drives the growth of the loose polyhedral envelope, while the
intersection with the upper balls still clips the feasible region
substantially.
\begin{figure}[t]
\centering
\includegraphics[width=0.99\linewidth]{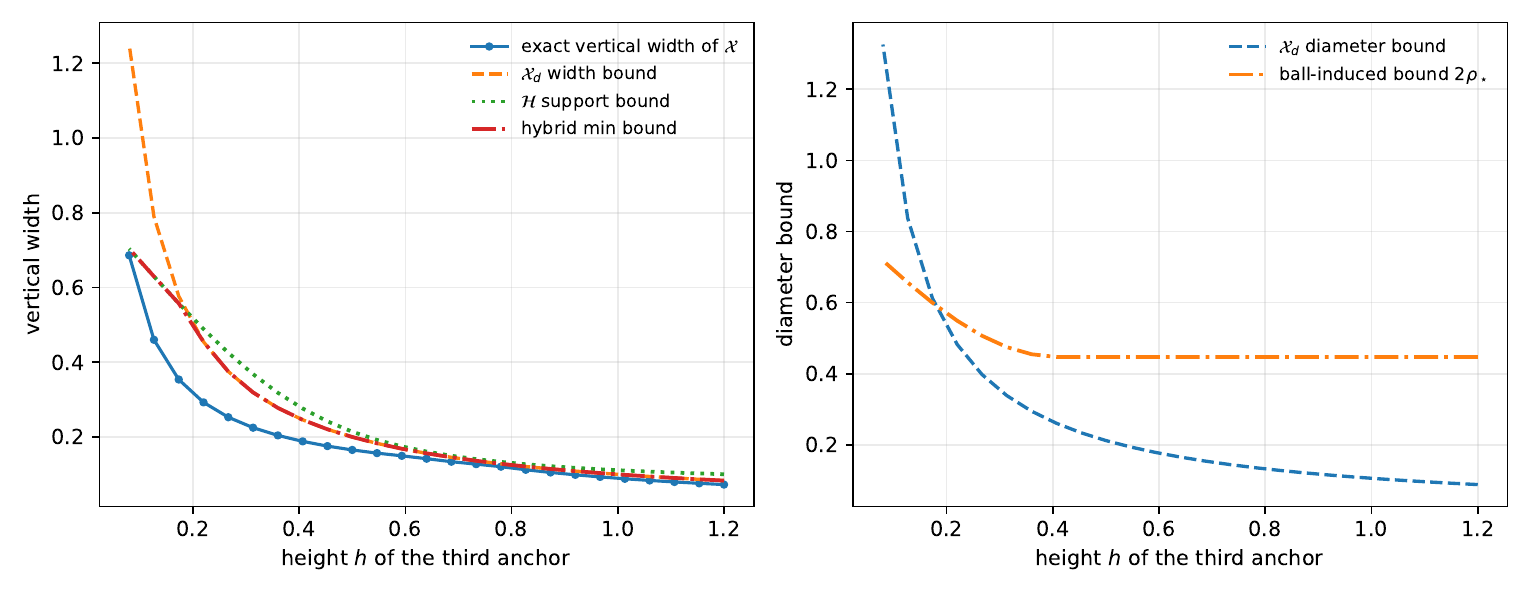}
\caption{Geometry sweep for the actual localization set $\X$. Left: exact vertical width, the $\Xd$-based bound, the $\Hset$-support bound, and their hybrid minimum.
Right: pure polyhedral diameter bound and ball-induced diameter bound.}
\label{fig:sweep}
\end{figure}

\section{Design perspective and practical implications}
The new results support a simple two-stage workflow for set-membership
multilateration.
Before measurements are collected, one uses the geometry-only
scores $\gamma_E$ and $\gamma_D$ in \eqref{eq:EDscores} to select or schedule a
subset of anchors.
These scores are inexpensive, target-free, and depend only
on anchor positions, so they are natural design surrogates in applications with
communication, energy, or sensing-budget constraints.
After measurements are
available, one keeps the selected subset and computes the corresponding
$\Hset$-aware certificates $\rho_\star$ and $\psi_{\Hset}$ to assess the actual
size of the resulting localization set $\X$.
This separation also clarifies the respective roles of the D- and E-scores.
If one seeks a compact overall uncertainty region, then maximizing $\det S$ is
the appropriate default, as confirmed by the subset-selection experiment on the
exact box area of $\X$.
If the application is instead driven by worst-direction
ambiguity, then maximizing $\lambda_{\min}(S)$ is the natural alternative because
it directly minimizes the dominant term in \eqref{eq:diam}.
The online
$\Hset$-aware terms should then be viewed as post-selection certificates: they
do not replace the geometry scores, but they substantially sharpen the realized
uncertainty assessment of the selected subset, especially in nearly collinear
or clustered geometries.
\section{Conclusions}
We re-framed the localization-set construction of \cite{CalafioreSIAM}
as a design problem for anchor subset selection under bounded range
errors.
The matrix $S(A)=AQ_mA\tran$ yields simple geometry-only
E- and D-type scores that can be used offline to select anchors before
measurements are collected.
At the same time, the special structure
$\X=\Xd\cap\Hset$ yields new $\Hset$-aware online certificates for the
selected subset: the simplex-aggregated radius $\rho_\star$ gives a
finite diameter bound even when the $\Xd$-based certificate deteriorates
badly, and the support-function formula for $\Hset$ leads to hybrid
directional bounds that track the actual size of $\X$ closely.
The reported experiments show near-oracle subset selection on the tested instances and indicate that the new $\Hset$-aware certificates explain a substantial part of the gap between the loose polyhedral envelope and the realized localization set.

Future research directions include extending these geometry-aware certificates to dynamic localization, where the target's state is tracked over time using bounded-error kinematic models. Additionally, incorporating unknown-but-bounded uncertainty on the anchor positions themselves, and developing decentralized algorithms for evaluating the D- and E-scores in large-scale networks, remain open and promising problems.


\begin{thebibliography}{99}
\bibitem{CalafioreSIAM}
G.~C. Calafiore,
\newblock Set-membership localization via range measurements,
\newblock \emph{SIAM Journal on Optimization}, to appear.
Preprint available at \url{http://arxiv.org/abs/2603.04867}

\bibitem{Milanese1996}
M.~Milanese, J.~Norton, H.~Piet-Lahanier, and E.~Walter, editors,
\newblock \emph{Bounding Approaches to System Identification},
\newblock Plenum Press, New York, 1996.

\bibitem{Jaulin2001}
L.~Jaulin, M.~Kieffer, O.~Didrit, and E.~Walter,
\newblock \emph{Applied Interval Analysis},
\newblock Springer, London, 2001.

\bibitem{ChanHo1994}
Y.~T.
Chan and K.~C. Ho,
\newblock A simple and efficient estimator for hyperbolic location,
\newblock \emph{IEEE Transactions on Signal Processing}, 42(8):1905--1915, 1994.

\bibitem{BeckStoicaLi2008}
A.~Beck, P.~Stoica, and J.~Li,
\newblock Exact and approximate solutions of source localization problems,
\newblock \emph{IEEE Transactions on Signal Processing}, 56(5):1770--1778, 2008.

\bibitem{Biswas2006}
P.~Biswas, T.-C.
Liang, K.-C. Toh, T.-C. Wang, and Y.~Ye,
\newblock Semidefinite programming approaches for sensor network localization with noisy distance measurements,
\newblock \emph{IEEE Transactions on Automation Science and Engineering}, 3(4):360--371, 2006.

\bibitem{Tseng2007}
P.~Tseng,
\newblock Second-order cone programming relaxation of sensor network localization,
\newblock \emph{SIAM Journal on Optimization}, 18(1):156--185, 2007.

\bibitem{Bancroft1985}
S.~Bancroft,
\newblock An algebraic solution of the GPS equations,
\newblock \emph{IEEE Transactions on Aerospace and Electronic Systems}, AES-21(1):56--59, 1985.

\bibitem{NaddafzadehShirazi2014}
G.~Naddafzadeh~Shirazi, M.~B.
Shenouda, and L.~Lampe,
\newblock Second order cone programming for sensor network localization with anchor position uncertainty,
\newblock \emph{IEEE Transactions on Wireless Communications}, 13(2):749--763, 2014.

\bibitem{ShiMaoAndersonYangChen2017}
X.~Shi, G.~Mao, B.~D.~O.
Anderson, Z.~Yang, and J.~Chen,
\newblock Robust localization using range measurements with unknown and bounded errors,
\newblock \emph{IEEE Transactions on Wireless Communications}, 16(6):4065--4078, 2017.

\bibitem{ChangSahai2006}
C.~Chang and A.~Sahai,
\newblock Cram\'er--Rao-type bounds for localization,
\newblock \emph{EURASIP Journal on Advances in Signal Processing}, 2006:94287, 2006.

\bibitem{Bishop2010}
A.~N.
Bishop, B.~Fidan, B.~D.~O. Anderson, K.~Dogancay, and P.~N. Pathirana,
\newblock Optimality analysis of sensor-target localization geometries,
\newblock \emph{Automatica}, 46(3):479--492, 2010.

\bibitem{WiddisonLong2024}
E.~Widdison and D.~G.
Long,
\newblock A review of linear multilateration techniques and applications,
\newblock \emph{IEEE Access}, 12:26251--26266, 2024.

\bibitem{MorenoSalinas2013}
D.~Moreno-Salinas, A.~M.
Pascoal, and J.~Aranda,
\newblock Optimal sensor placement for multiple target positioning with range-only measurements in two-dimensional scenarios,
\newblock \emph{Sensors}, 13(8):10674--10710, 2013.
\end{thebibliography}
\end{document}